\documentclass[aps,prl,showpacs,final,twocolumn]{revtex4}

\usepackage{amsfonts}
\usepackage{amssymb}
\usepackage{amsmath}
\usepackage{amsthm}
\usepackage{mathrsfs}
\usepackage{bm}
\usepackage{graphicx}
\usepackage{multirow}
\usepackage{enumitem}
\theoremstyle{definition} \newtheorem{thm}{Theorem}
\theoremstyle{definition} \newtheorem{cor}{Corollary}

\begin{document}
\title{General method for finding ground state manifold of classical Heisenberg model}

\author{Zhaoxi Xiong$^{1,}$}
\email{xiong@mit.edu}
\author{Xiao-Gang Wen$^{1,2}$}
\affiliation{$^1$Department of Physics, Massachusetts Institute of Technology, Cambridge, Massachusetts 02139, USA\\
$^2$Perimeter Institute for Theoretical Physics, Waterloo, Ontario, N2L 2Y5 Canada}
\date{\today}

\begin{abstract}

We investigate classical Heisenberg models with the translation symmetries of infinite crystals. We prove a spiral theorem, which states that under certain conditions there must exist spiral ground states, and propose a natural classification of all manageable models based on some ``spectral properties," which are directly related to their ground state manifolds. We demonstrate how the ground state manifold can be calculated analytically for all spectra with finite number of minima and some with extensive minima, and algorithmically for the others. We also extend the method to particular anisotropic interactions.

\end{abstract}

\pacs{75.10.-b, 75.10.Hk, 73.43.-f}

\maketitle

Classical spin orders are the starting point of nearly every quantum mechanical treatment for the same subject. In the twentieth century, quantum spin wave theories were developed to analyze the low energy excitations of systems with ferromagnetic (F) and antiferromagnetic (AF) classical ground states \cite{quantumSpinWave, HolsteinPrimakoff}. By introducing local frames, one can extend such quantum fluctuation analysis to systems with non-collinear and non-coplanar classical spin orders (see e.g. \cite{localFrame}).

This ``quantum-classical" approach is suitable for large-$S$ systems such as rare earth materials, which have attracted special interests in the past decade due to their central role in quantum anomalous Hall effect \cite{QAHEmechanism}. In this respect, non-coplanar classical spin orders are especially favored. In the opposite small-$S$ extreme, one either extrapolates the results of large-$S$ expansions, or in the case of spin-$\frac 12$'s takes classical spins as a mean-field approximation.

It is therefore of fundamental importance to understand classical spin orders. Among all existing methods, the most intuitive ones are probably what might be called the pairwise minimization method (see e.g. \cite{pairwiseMinimization}), and its generalized version the cluster method \cite{clusterMinimization}, where the energy is minimized locally, and the global compatibility is essentially left to chance. A different approach is to use weak constraints while minimizing the energy, and to check whether the strong constraints are met afterwards. This is the idea of the Luttinger-Tisza (LT) method \cite{LT} and the generalized Luttinger-Tisza method \cite{GLT} (see also \cite{reviewLTs} for a review). Additionally, there is a so called classical spin wave method (see e.g. \cite{squareJ1J2F,honeycombJ1J2J3}), in which the energy is minimized within a spiral or helical ansatz (see Fig.~\ref{fig:schematic}). The ansatz is partially justified by a spiral theorem proved using the LT methods \cite{reviewLTs}. Superposition of spin waves is valid only in special cases.

\begin{figure}[!b]
\centering
\includegraphics[width=2.1in]{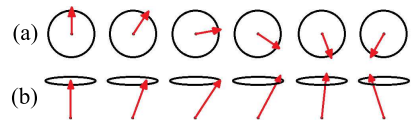}
\caption{Schematic (a) spiral and (b) helical states.}
\label{fig:schematic}
\end{figure}

We are concerned with a classical Heisenberg Hamiltonian with the following properties
(will be referred to as the basic assumptions): ($\rm{i}$) it has the translation symmetry of an infinite crystal, and ($\rm{ii}$) all spins are real, unit 3-vectors. This Hamiltonian is rather general: we made no assumption about the crystal dimensionality, the Bravais lattice, the basis, or the pattern of interaction. It is also obvious that the difference in spin length can be absorbed into the coupling constants. We shall strive to find the {\it entire exact} ground state manifold (GSM) of the Hamiltonian.

While in both the cluster method and the LT methods some of the constraints are first dropped and then restored, we shall take a approach in which full information is retained all the time. This turns out to be more straightforward conceptually, and practically it enables the determination of the entire GSM, which is crucial in numerous contexts. The formalism to be developed will suggest a natural classification of all manageable models (i.e. those in the realm of the spiral theorem) based on the correspondence between some simple spectral properties (to be explained) and the GSMs. Dictionaries from the spectral properties to GSMs could hence be compiled, so that understanding a classical Heisenberg model would become utterly trivial.

The spiral theorem to be proved is the following: for any classical Heisenberg model that satisfies the basic assumptions, if it has one spin per unit cell, or if it has multiple spins per unit cell and has some additional properties, then there is a spiral state in the GSM. The one-spin part has been proved previously in the LT framework \cite{reviewLTs}, but to our knowledge no formal statement of the multi-spin part has ever existed.

\emph{One-spin case.}---Consider a classical Heisenberg system with a large number $N$ of single-spin unit cells. The Hamiltonian can be written as
\begin{equation}
H = \frac12 \sum_{\vec n} \sum_{\vec \eta} J_{\vec \eta} \vec S_{\vec n} \cdot \vec S_{\vec n + \vec \eta} = \frac12 \sum_{\vec n} \sum_{\vec \eta} J_{\vec \eta} \vec S_{\vec n}^\dag \vec S_{\vec n + \vec \eta},
\label{1_H}
\end{equation}
where $J_{\vec \eta}$ are the $\vec \eta$-dependent coupling constants, and real, unit 3-vectors $\vec S_{\vec n}$ are the spins at Bravais lattice points $\vec n$. We shall keep to the conventions of positive sign and factor $\frac12$. Passing to the Fourier space, we rewrite the energy per spin as
\begin{equation}
\epsilon = \sum_{\vec k} \frac 1N \vec S_{\vec k}^\dag \epsilon_{\vec k} \vec S_{\vec k},
\label{1_epsilonDiag}
\end{equation}
where
\begin{eqnarray}
\epsilon_{\vec k} &\equiv& \frac12 \sum_{\vec \eta} J_{\vec \eta} e^{i \vec k \cdot \vec \eta}, \label{1_epsilonk} \\
\vec S_{\vec k} &\equiv& \frac{1}{\sqrt N} \sum_{\vec n} \vec S_{\vec n} e^{- i \vec k \cdot \vec n}.
\label{1_Sk}
\end{eqnarray}
Because Eq.~(\ref{1_H}) formally resembles a hopping Hamiltonian -- in that case each component of $\vec S_{\vec n}$ would have to be interpreted as an operator -- it is appropriate to call $\epsilon_{\vec k}$ the spectrum or the band structure of $H$, and to perceive as the $\mathcal T$-reversal symmetry the fact that $J_j$ are real. It follows from the $\mathcal T$-reversal symmetry that $\epsilon_{-\vec k} = \epsilon_{\vec k}$.

To find the GSM, we need to know exactly what $\{ \vec S_{\vec k} \}$ are legitimate. A configuration $\{ \vec S_{\vec n} \}$ is legitimate if and only if it is contained in
\begin{equation}
{\mathcal D} = \big\{ \{ \vec S_{\vec n} \} | \vec S_{\vec n} \in {\mathbb R}^3,\ \vec S_{\vec n}^\dag \vec S_{\vec n} = 1 \big\},
\label{1_domain}
\end{equation}
It follows that the set of legitimate $\{ \vec S_{\vec k} \}$ is
\begin{eqnarray}
{\mathcal I} = \big\{ \{ \vec S_{\vec k} \} | \vec S_{\vec k} \in {\mathbb C}^3, \vec S_{-\vec k} = \vec S_{\vec k}^*, \sum_{\vec k} \vec S_{\vec k}^\dag \vec S_{\vec k + \vec p} = N \delta_{\vec 0 \vec p}, \forall \vec p \big\},
\label{1_image}
\end{eqnarray}
where the summation over $\vec k$ is taken in the Brillouin zone (BZ), and reciprocal translation vectors are implicitly added to wave vectors that lie outside the BZ. $\mathcal D$ and $\mathcal I$ are the domain and the image of the injective map Eq.~(\ref{1_Sk}), respectively. We are ready to prove the following theorem.

\begin{thm}[Spiral theorem (one-spin case)]
Every classical Heisenberg Hamiltonian with one spin per unit cell that satisfies the basic assumptions has a spiral state in its GSM.
\label{theorem:1spiral}
\end{thm}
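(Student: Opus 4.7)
\emph{Proof plan.} The plan is to combine a Luttinger--Tisza-style spectral lower bound on the energy with an explicit spiral ansatz that saturates that bound.

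First, I would derive a lower bound for $\epsilon$ from Eq.~(\ref{1_epsilonDiag}). Let $\epsilon_{\min} = \min_{\vec k} \epsilon_{\vec k}$, which is attained at some $\vec Q^\ast$ in the BZ (by continuity and compactness). The sum-rule constraint in ${\mathcal I}$ with $\vec p = \vec 0$ reads $\sum_{\vec k} \vec S_{\vec k}^\dag \vec S_{\vec k} = N$, so
\begin{equation}
\epsilon = \frac{1}{N} \sum_{\vec k} \epsilon_{\vec k}\, \vec S_{\vec k}^\dag \vec S_{\vec k} \geq \frac{\epsilon_{\min}}{N} \sum_{\vec k} \vec S_{\vec k}^\dag \vec S_{\vec k} = \epsilon_{\min}.
\end{equation}
This is a necessary condition obeyed by \emph{any} legitimate configuration, hence in particular by any ground state.

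Next I would exhibit an explicit spiral that attains $\epsilon_{\min}$. Pick two orthogonal real unit 3-vectors $\hat e_1, \hat e_2$ and define the spiral
\begin{equation}
\vec S_{\vec n} = \hat e_1 \cos(\vec Q^\ast \cdot \vec n) + \hat e_2 \sin(\vec Q^\ast \cdot \vec n).
\end{equation}
Its Fourier transform via Eq.~(\ref{1_Sk}) is supported on $\vec k = \pm \vec Q^\ast$ with $\vec S_{\pm \vec Q^\ast} = \tfrac{\sqrt N}{2}(\hat e_1 \mp i \hat e_2)$ (up to a reciprocal lattice vector), so $\vec S_{-\vec k}=\vec S_{\vec k}^\ast$ and $\sum_{\vec k} \vec S_{\vec k}^\dag \vec S_{\vec k} = N$. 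A direct substitution into Eq.~(\ref{1_epsilonDiag}), using $\epsilon_{-\vec k}=\epsilon_{\vec k}$, gives $\epsilon = \epsilon_{\vec Q^\ast} = \epsilon_{\min}$. I would also check explicitly that $\vec S_{\vec n}^\dag \vec S_{\vec n} = \cos^2 + \sin^2 = 1$, so the strong constraint in ${\mathcal D}$ is satisfied. Since this configuration is legitimate and saturates the lower bound, it lies in the GSM.

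The only genuinely delicate step is what happens when $\vec Q^\ast$ is a time-reversal-invariant momentum, i.e.\ when $2\vec Q^\ast$ belongs to the reciprocal lattice. There $\vec S_{\vec Q^\ast}$ and $\vec S_{-\vec Q^\ast}$ coincide and must be real, so the ansatz degenerates to a collinear $\pm \hat e_1$ pattern; one checks $\cos(\vec Q^\ast \cdot \vec n) = \pm 1$ at every Bravais site, so the spins still have unit length and the energy is still $\epsilon_{\vec Q^\ast} = \epsilon_{\min}$. I expect this case-split to be the main, and only, subtlety --- the ferromagnetic case $\vec Q^\ast = \vec 0$ is trivially included as well. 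Apart from this, the argument is just the standard observation that the LT bound is saturated by a single-$\vec Q$ helix, which both respects the real-vector reality condition and, thanks to the $\sin/\cos$ parametrization, respects the pointwise unit-length constraint.
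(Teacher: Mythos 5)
Your proposal is correct and follows essentially the same route as the paper: both derive the lower bound $\epsilon \ge \epsilon_{\min}$ from the $\vec p = \vec 0$ sum rule and then saturate it with a single-$\vec Q$ spiral, splitting into the cases where $\pm\vec Q^\ast$ are inequivalent (circular spiral) or equivalent (collinear $\pm\hat e_1$ state). The only cosmetic difference is that you verify the unit-length constraint directly in real space via ${\mathcal D}$, whereas the paper checks membership of the Fourier amplitudes in ${\mathcal I}$ and then inverts.
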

\begin{proof}
It follows from the last condition in Eq.~(\ref{1_image}) that $\sum_{\vec k} \frac1N \vec S_{\vec k}^\dag \vec S_{\vec k} = 1$, so if $\vec S_{\vec k} \neq \vec 0$ only at some global minima in the spectrum, then the resulting state must be a rigorous ground state. Suppose $\vec k_1$ is a global minimum. Since $\epsilon_{-\vec k} = \epsilon_{\vec k}$, so is $-\vec k_1$. One can verify that the specification $\vec S_{\pm \vec k_1} = \frac{\sqrt N}{2} (1,\pm i,0)^T$ for inequivalent $\pm \vec k_1$, or $\sqrt N (1,0,0)^T$ for equivalent $\pm \vec k_1$, and zero elsewhere, is allowable according to $\mathcal I$. This corresponds to the configuration $\vec S_{\vec n} = (\cos (\vec k_1 \cdot \vec n), - \sin(\vec k_1 \cdot \vec n), 0)^T$, or $(e^{i \vec k_1 \cdot \vec n}, 0, 0)^T$, respectively, which is a spiral state.
\end{proof}

The absolutely lowest points in the spectrum (out of all bands, in the multi-spin case to come) will be called the ``spectral minima." Theorem \ref{theorem:1spiral} implies that the entire GSM can be found by discarding all $\vec S_{\vec k}$ but those at the spectral minima, and plugging them into Eq.~(\ref{1_image}). It is also clear that there is a direct correspondence between the GSM and the spectral minimum distribution, which can hence be used to classify all one-spin models.

We shall now classify two-dimensional (D) systems and present their GSMs. Given any Bravais lattice, we can align the axes accordingly so that $n_x,\ n_y$ are integers and that the BZ is $(-\pi, \pi] \times (-\pi, \pi]$. Let us restrict ourselves for now to the case of finite number of spectral minima. Due to the periodicity of $\vec k$, some special distributions must be treated separately. Examples of them are given in Fig.~\ref{fig:2DfiniteSpecial}, which is far from complete, of course; yet more complicated distributions appear less often. We summarize the GSMs for these special distributions along with the GSMs for some generic distributions in Table~\ref{table:2Dfinite} \cite{supp}. The six entries in Table~\ref{table:2Dfinite} represent systems with spiral ground states, F/AF ground states, frustrated ground states, spiral ground states, frustrated ground states, and (possibly alternating) helical ground states, respectively. With tables like this established, it would be really easy to read off the GSM of a model for each value of the parameters, and thereby sketch out its ground state phase diagram.

\begin{figure}[!b]
\centering
\includegraphics[width=3.4in]{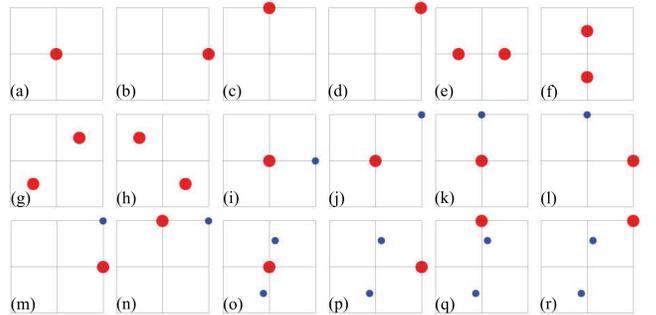}
\caption{Special spectral minimum distributions for 2D systems with one spin per unit cell. The boxes are the BZ $(-\pi, \pi] \times (-\pi, \pi]$ (see text). The dots represent the spectral minima, with the larger and the smaller ones named $\pm \vec k_1$ and $\pm \vec k_2$ respectively. The list exhausts all special distributions with one pair of minima. In (o) - (r), $\pm \vec k_2$ are generically located.}
\label{fig:2DfiniteSpecial}
\end{figure}

\begin{table}[!t]
\caption{The GSMs for spectral minimum distributions of 2D systems with one spin per unit cell. Parameters $\gamma \in \mathbb R,\ \rho \in SO_3$ are arbitrary.}
\begin{tabular}{cc}
\hline
\hline
Spectral minima & GSM\\
\hline
Generic $\pm \vec k_1$ & $\vec S_{\vec n} = \rho \begin{pmatrix} \cos(\vec k_1 \cdot \vec n) \\ -\sin(\vec k_1 \cdot \vec n) \\ 0 \end{pmatrix}$ \\
Figs. \ref{fig:2DfiniteSpecial}(a)-\ref{fig:2DfiniteSpecial}(d) & $\vec S_{\vec n} = \rho \begin{pmatrix} e^{i \vec k_1 \cdot \vec n} \\ 0 \\ 0 \end{pmatrix}$ \\
Figs. \ref{fig:2DfiniteSpecial}(e)-\ref{fig:2DfiniteSpecial}(h) & $\vec S_{\vec n} = \rho \begin{pmatrix} \sqrt2 \cos(\vec k_1 \cdot \vec n + \frac \pi 4 ) \cos \gamma \\ \sqrt2 \sin(\vec k_1 \cdot \vec n + \frac \pi 4 ) \sin \gamma \\ 0 \end{pmatrix}$ \\
Generic $\pm \vec k_1,\ \pm \vec k_2$ & $\vec S_{\vec n} = \rho \begin{pmatrix} \cos(\vec k_j \cdot \vec n) \\ -\sin(\vec k_j \cdot \vec n) \\ 0 \end{pmatrix},\ j=1\mbox{ or }2$\\
Figs. \ref{fig:2DfiniteSpecial}(i)-\ref{fig:2DfiniteSpecial}(n) & $\vec S_{\vec n} = \rho \begin{pmatrix} e^{i \vec k_1 \cdot \vec n} \cos \gamma \\ e^{i \vec k_2 \cdot \vec n} \sin \gamma \\ 0 \end{pmatrix}$ \\
Figs. \ref{fig:2DfiniteSpecial}(o)-\ref{fig:2DfiniteSpecial}(r) & $\vec S_{\vec n} = \rho \begin{pmatrix} \cos(\vec k_2 \cdot \vec n) \sin \gamma \\ -\sin(\vec k_2 \cdot \vec n) \sin \gamma \\ e^{i \vec k_1 \cdot \vec n} \cos \gamma \end{pmatrix}$ \\
\hline
\hline
\end{tabular}
\label{table:2Dfinite}
\end{table}

Let us apply our method to two classic problems. The first is the square lattice $J_1 - J_2$ model, where $J_1$ and $J_2$ are the nearest-neighbor (NN) and the next nearest-neighbor coupling constants, respectively. Its spectrum is $\epsilon_{\vec k} = J_1 (\cos k_x + \cos k_y) + 2 J_2 \cos k_x \cos k_y$. In regimes $J_1 < 0,\ J_2 < 0.5|J_1|$; $J_1>0,\ J_2<0.5 J_1$; and $J_2>0.5 |J_1|$, which fall under Figs.~\ref{fig:2DfiniteSpecial}(a)(d)(l), the system is in F phase, AF phase, and frustrated two-sublattice AF phase, respectively (see e.g. \cite{squareJ1J2AF,squareJ1J2F}). The second is the triangular lattice $J_1 - J_2$ model, where $J_1, ~ J_2$ are again the coupling constants for NNs and next NNs, respectively. The spectrum is $\epsilon_{\vec k} = J_1 [\cos k_x + \cos k_y + \cos(k_x-k_y)] + J_2 [\cos (k_x+k_y)+ \cos (2k_x-k_y) + \cos(2k_y-k_x)]$ (for a particular alignment of the axes). In the regime $J_1 <0,\ J_2 < \frac13 |J_1|$, the minimum distribution is given by Fig.~\ref{fig:2DfiniteSpecial}(a), so we have ferromagnetic ground states. For $J_1 > 0,\ J_2 < \frac18 J_1$, a single pair of minima appears at $\pm (\frac23 \pi, -\frac23 \pi)$, resulting in the famous $\frac{2\pi}{3}$-phase (see e.g. \cite{triangularJ1}). The remaining cases contain three pairs of minima, which are not covered in Table \ref{table:2Dfinite} but are solvable.

\emph{Multi-spin case.}---Consider the following classical Heisenberg Hamiltonian of a system with $N$ unit cells and $m$ sublattices:
\begin{eqnarray}
H &=& \frac12 \sum_{\vec n} \sum_{\vec \eta} \sum_{a,b=1}^m J_{\vec \eta ab} \vec S_{\vec n a} \cdot \vec S_{\vec n + \vec \eta b} \nonumber\\
&=& \frac12 \sum_{\vec n} \sum_{\vec \eta} \sum_{a,b=1}^m J_{\vec \eta ab} \vec S_{\vec n a}^\dag \vec S_{\vec n + \vec \eta b},
\label{m_H}
\end{eqnarray}
where $a,\ b = 1,2,\ldots,m$ are the sublattice indices. In the momentum space, $H$, or more conveniently the energy per spin $\epsilon$, is partially diagonal, reading
\begin{eqnarray}
\epsilon = \sum_{\vec k, a,b} \frac{1}{Nm} \vec S_{\vec k a}^\dag h(\vec k)_{ab} \vec S_{\vec k b}, \label{m_epsilonPartialDiag}
\end{eqnarray}
with
\begin{eqnarray}
\vec S_{\vec k a} &\equiv& \frac{1}{\sqrt N} \sum_{\vec n} \vec S_{\vec n a} e^{-i \vec k \cdot \vec n}, \label{m_Ska} \\
h(\vec k)_{ab} &\equiv& \frac 12 \sum_{\eta} J_{\vec \eta ab} e^{i \vec k \cdot \vec \eta}. \label{m_hk}
\end{eqnarray}
We can further express $h(\vec k)$ in terms of its eigenvalues $\epsilon_{\vec k \alpha}$ and eigenvectors $\xi_{\vec k \alpha}$, where $\alpha = 1,~ \ldots,~ m$, yielding
\begin{eqnarray}
\epsilon = \sum_{\vec k, \alpha} \frac{1}{Nm} \vec S_{\vec k \alpha}^\dag \epsilon_{\vec k \alpha} \vec S_{\vec k \alpha}, \label{m_epsilonDiag}
\end{eqnarray}
with
\begin{equation}
\vec S_{\vec k \alpha} \equiv \sum_{a} \vec S_{\vec k a} \xi_{\vec k \alpha, a}^*. \label{m_Skalpha}
\end{equation}
Multiple bands emerge in this multi-spin scenario. The $\mathcal T$-reversal symmetry implies $h(-\vec k) = h(\vec k)^*$, and consequently, the eigenvalues satisfy $\epsilon_{-\vec k \alpha} = \epsilon_{\vec k \alpha}$, and the eigenvectors can be chosen so that $\xi_{-\vec k \alpha} = \xi_{\vec k \alpha}^*$. Then the image of the composite map Eqs.~(\ref{m_Ska})(\ref{m_Skalpha}) can be written as
\begin{eqnarray}
{\mathcal I} &=&
\big\{
\{ \vec S_{\vec k \alpha} \} | \vec S_{\vec k \alpha} \in {\mathbb C}^3,\ \vec S_{-\vec k \alpha} = \vec S_{\vec k \alpha}^*, \nonumber\\
&& \sum_{\vec k,\alpha,\alpha'} \xi_{\vec k \alpha', a}^* \vec S_{\vec k \alpha'}^\dag \vec S_{\vec k + \vec p \alpha} \xi_{\vec k + \vec p \alpha, a} = N \delta_{\vec 0 \vec p}, \forall a, \vec p
\big\}.
\label{m_image}
\end{eqnarray}
We now present the multi-spin version of the spiral theorem, which involves an additional requirement.

\begin{thm}[Spiral theorem (multi-spin case)]
For any classical Heisenberg Hamiltonian with $m$ spins per unit cell that satisfies the basic assumptions, if there is a spectral minimum $(\vec k_1, \alpha_1)$ (out of all $\vec k, \alpha$)
where $\xi_{\vec k_1 \alpha_1}$ can be chosen to take the form $\xi_{\vec k_1 \alpha_1} = \frac{1}{\sqrt m} ( e^{- i t_1}, \ldots, e^{- i t_m} )^T$, then there exists a spiral state in the GSM.
\label{theorem:mspiral}
\end{thm}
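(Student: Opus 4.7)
The plan is to mirror the proof of Theorem \ref{theorem:1spiral}: concentrate the Fourier weight of $\{\vec S_{\vec k \alpha}\}$ entirely at the spectral minimum $(\pm \vec k_1, \alpha_1)$, so that Eq.~(\ref{m_epsilonDiag}) automatically collapses to $\epsilon = \epsilon_{\vec k_1 \alpha_1}$ and any such legitimate configuration is a rigorous ground state. The normalization rule built into $\mathcal I$ --- namely $\sum_{\vec k \alpha} \vec S_{\vec k \alpha}^\dagger \vec S_{\vec k \alpha} = Nm$, obtained by summing the $\vec p = \vec 0$ condition in Eq.~(\ref{m_image}) over $a$ and using unitarity of the eigenvectors --- then pins the full weight $Nm$ at the minimum and saturates the band-structure lower bound on the energy.

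The specific construction I would try is $\vec S_{\pm \vec k_1, \alpha_1} = \tfrac{\sqrt{Nm}}{2}(1, \pm i, 0)^T$ for inequivalent $\pm\vec k_1$ (and $\sqrt{Nm}\,(1,0,0)^T$ for the equivalent case), with all other $\vec S_{\vec k \alpha}$ set to zero. Inverting Eq.~(\ref{m_Skalpha}) gives $\vec S_{\vec k_1 a} = \xi_{\vec k_1 \alpha_1, a}\,\vec S_{\vec k_1 \alpha_1} = \tfrac{\sqrt N}{2}\,e^{-it_a}(1,i,0)^T$, and then inverting Eq.~(\ref{m_Ska}) produces
\begin{equation}
\vec S_{\vec n a} = \bigl(\cos(\vec k_1 \cdot \vec n - t_a),\ -\sin(\vec k_1 \cdot \vec n - t_a),\ 0\bigr)^T,
\end{equation}
a spiral state in which every sublattice shares a common rotation plane and pitch $\vec k_1$, differing only by the phase offset $t_a$. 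The equivalent-$\pm\vec k_1$ case collapses to an analogous collinear F/AF-like spiral. Since each $\vec S_{\vec n a}$ is manifestly a real unit 3-vector, the configuration lies in $\mathcal D$ and hence in $\mathcal I$, finishing the argument.

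The main obstacle, and precisely the point at which the extra hypothesis on $\xi_{\vec k_1 \alpha_1}$ becomes indispensable, is enforcing the unit-norm constraint sublattice by sublattice. After the two inverse transforms one reads off $|\vec S_{\vec n a}| = \sqrt m\,|\xi_{\vec k_1 \alpha_1, a}|$, so unless every component of $\xi_{\vec k_1 \alpha_1}$ has the common modulus $1/\sqrt m$, at least one sublattice inherits a spin of length $\neq 1$ and the candidate configuration exits $\mathcal D$. The assumed form $\xi_{\vec k_1 \alpha_1, a} = e^{-it_a}/\sqrt m$ is exactly what is needed to make the magnitudes uniform while leaving harmless phases $t_a$ that merely shift the spiral on each sublattice. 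Dropping this hypothesis and trying to restore normalization by injecting weight at higher-$\epsilon_{\vec k \alpha}$ modes would immediately push $\epsilon$ above the spectral minimum and break the argument, which is precisely why the multi-spin theorem must be stated with this additional condition.
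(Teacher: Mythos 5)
Your proposal is correct and follows essentially the same route as the paper's proof: concentrate all Fourier weight at the spectral minimum via the prescription $\vec S_{\pm\vec k_1,\alpha_1}=\tfrac{\sqrt{Nm}}{2}(1,\pm i,0)^T$ (or $\sqrt{Nm}\,(1,0,0)^T$ in the equivalent case), verify legitimacy, and invert to obtain the spiral $\vec S_{\vec n a}=\big(\cos(\vec k_1\cdot\vec n-t_a),-\sin(\vec k_1\cdot\vec n-t_a),0\big)^T$. Your closing discussion of why the equal-modulus condition on $\xi_{\vec k_1\alpha_1}$ is indispensable for the sublattice-wise unit-norm constraint is a useful explication of a step the paper leaves as ``one can verify.''
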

\begin{proof}
Tracing the last condition in Eq.~(\ref{m_image}) over $a$, one finds $\sum_{\vec k, \alpha} \frac{1}{Nm} \vec S_{\vec k \alpha}^\dag \vec S_{\vec k \alpha} = 1$, so if $\vec S_{\vec k\alpha} \neq \vec 0$ only at spectral minima, then the resulting state must be a rigorous ground state. Suppose $(\pm \vec k_1, \alpha_1)$ are a pair of spectral minima where $\xi_{\pm\vec k_1 \alpha}$ take the required form. Then one can verify that the prescription such that $\vec S_{\pm \vec k_1, \alpha_1} = \frac{\sqrt{Nm}}{2} ( 1 , \pm i , 0 )^T$ for inequivalent $\pm k_1$, or $\sqrt{Nm} ( 1 , 0 , 0 )^T$ for equivalent $\pm k_1$, and zero elsewhere, is allowable according to $\mathcal I$. This represents a spiral state, $\vec S_{\vec n a} = \big( \cos(\vec k_1 \cdot \vec n - t_a), -\sin(\vec k_1 \cdot \vec n - t_a), 0 \big)^T$ or $\big( e^{i(\vec k_1 \cdot \vec n - t_a)}, 0, 0 \big)^T$, respectively.
\end{proof}

\begin{cor}
For a system with $m$ spins per unit cell, if for each $\vec k$, up to a gauge transformation (i.e. multiplying basis by phases), $h(\vec k)_{ab}$ only depends on $(a-b)\mbox{ mod }m$, then there exists a spiral ground state.
\label{corollary:m}
\end{cor}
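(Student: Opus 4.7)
The plan is to reduce the corollary directly to the multi-spin spiral theorem (Theorem~\ref{theorem:mspiral}) by exhibiting, at a suitably chosen spectral minimum, an eigenvector of $h(\vec k)$ with all components of equal magnitude $1/\sqrt m$. The gauge hypothesis supplies phases $\phi_a(\vec k)$ such that $\tilde h(\vec k)_{ab} \equiv e^{-i\phi_a(\vec k)}\, h(\vec k)_{ab}\, e^{i\phi_b(\vec k)}$ depends only on $(a-b) \bmod m$, i.e., $\tilde h(\vec k)$ is a circulant $m\times m$ matrix.

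The first step is to invoke the standard fact that every circulant matrix is diagonalized by the discrete Fourier transform: the eigenvectors of $\tilde h(\vec k)$ are $\tilde\xi_l$ with entries $\tilde\xi_{l,a} = \frac{1}{\sqrt m}\, e^{2\pi i l a/m}$ for $l=0,1,\ldots,m-1$, each of modulus $1/\sqrt m$. Undoing the gauge yields eigenvectors of $h(\vec k)$ with the same eigenvalues and with entries $\xi_{\vec k,l,a} = \frac{1}{\sqrt m}\, e^{i[\phi_a(\vec k) + 2\pi l a/m]}$, which are still of modulus $1/\sqrt m$.

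The second step is to pick any absolute minimum of the spectrum over all $(\vec k,\alpha)$, which exists by continuity and compactness of the BZ, and to relabel it $(\vec k_1, l_1)$ using the ordering above. Setting $t_a \equiv -\phi_a(\vec k_1) - 2\pi l_1 a/m$, the chosen eigenvector takes exactly the form $\xi_{\vec k_1 \alpha_1} = \frac{1}{\sqrt m}(e^{-it_1},\ldots,e^{-it_m})^T$ demanded by Theorem~\ref{theorem:mspiral}. Applying that theorem produces a spiral ground state, finishing the argument.

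The one subtle point I anticipate is reconciling the $\vec k$-dependent gauge with the convention $\xi_{-\vec k \alpha} = \xi_{\vec k \alpha}^*$: in principle $\phi_a$ may have no globally smooth choice, so one cannot expect the circulant structure to yield a single consistent basis across the whole BZ. This turns out not to matter, because Theorem~\ref{theorem:mspiral} only constrains the eigenvector at the single point $\vec k_1$ (its $-\vec k_1$ partner is then fixed by complex conjugation), so the gauge need only be chosen pointwise at $\vec k_1$ and no global regularity is required.
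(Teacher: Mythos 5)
Your proposal is correct and follows exactly the argument the paper intends: gauge-transform $h(\vec k_1)$ to a circulant matrix, use the fact that circulants are diagonalized by the discrete Fourier transform (whose eigenvectors have all components of modulus $1/\sqrt m$), note that undoing the diagonal gauge preserves this equal-modulus property, and feed the resulting eigenvector at a global spectral minimum into Theorem~\ref{theorem:mspiral}. Your closing remark is also apt — only the pointwise gauge at $\pm\vec k_1$ matters, so no global smoothness of the phases is needed.
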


\begin{cor}
For a system with two spins per unit cell, if $h(\vec k)$ contain no $\sigma_z$ for all $\vec k$, or equivalently if $H$ is invariant under an inversion that exchanges the two sublattices, then there exists a spiral ground state.
\label{corollary:2}
\end{cor}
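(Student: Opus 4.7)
My plan is to deduce this corollary from Theorem \ref{theorem:mspiral} by checking that its hypothesis holds: at any spectral minimum the eigenvector of $h(\vec k)$ can be written as $\frac{1}{\sqrt 2}(e^{-i t_1}, e^{-i t_2})^T$. Under the no-$\sigma_z$ assumption I claim this will actually hold at every $\vec k$, not just at the minima. To see this I would decompose the $2 \times 2$ Hermitian matrix in the Pauli basis, $h(\vec k) = c_0(\vec k) I + c_x(\vec k) \sigma_x + c_y(\vec k) \sigma_y + c_z(\vec k) \sigma_z$ with $c_\mu(\vec k) \in \mathbb R$. The no-$\sigma_z$ condition sets $c_z(\vec k) \equiv 0$, so the diagonal entries coincide and the traceless part is purely off-diagonal. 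Diagonalizing such an ``anti-diagonal plus scalar'' matrix yields eigenvectors whose two entries differ only by a phase and are therefore both of modulus $\frac{1}{\sqrt 2}$. A trivial reparametrization puts this in the form required by Theorem \ref{theorem:mspiral}, which then immediately delivers a spiral state in the GSM.

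For the ``equivalent'' clause I would translate the inversion acting as $\vec S_{\vec n a} \mapsto \vec S_{-\vec n - \vec \delta_a, \bar a}$ (with $\bar a$ the opposite sublattice and $\vec\delta_a$ an offset set by the inversion center) into the Fourier-space constraint $h(\vec k)_{ab} = e^{i\vec k\cdot(\vec\delta_a - \vec\delta_b)} h(-\vec k)_{\bar a \bar b}$; together with $\mathcal T$-reversal $h(-\vec k) = h(\vec k)^*$ and Hermiticity, the $a = b$ case yields $h(\vec k)_{11} = h(\vec k)_{22}$, i.e., $c_z \equiv 0$. The converse follows by absorbing the residual off-diagonal phase into the sublattice-basis gauge freedom already invoked in Corollary \ref{corollary:m}.

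The main obstacle is not the diagonalization nor the application of Theorem \ref{theorem:mspiral} -- both are essentially one-line steps -- but the gauge bookkeeping in the converse direction of the equivalence: the phase of $h(\vec k)_{12}$ determines the inversion center only up to that same gauge transformation, and making this precise requires a careful choice of origin within the unit cell. With that freedom granted, the two formulations match and Theorem \ref{theorem:mspiral} does the rest.
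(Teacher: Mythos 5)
Your proposal is correct and follows the route the paper intends: the no-$\sigma_z$ condition forces equal diagonal entries of the $2\times 2$ Hermitian matrix $h(\vec k)$, so its eigenvectors have components of equal modulus $\tfrac{1}{\sqrt 2}$ at every $\vec k$ (in the degenerate case $c_x=c_y=0$ one may still choose them so), which is exactly the hypothesis of Theorem~\ref{theorem:mspiral} at the spectral minima. Your treatment of the inversion-symmetry equivalence via the Fourier-space constraint and the sublattice gauge freedom is likewise consistent with the gauge-transformation language the paper uses around Corollaries~\ref{corollary:m}--\ref{corollary:3}.
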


\begin{cor}
For a system with three spins per unit cell, if $h(\vec k)_{11} = h(\vec k)_{22} = h(\vec k)_{33}$ and $|h(\vec k)_{12}| = |h(\vec k)_{23}| = |h(\vec k)_{13}|$ for all $\vec k$, then there exists a spiral ground state.
\label{corollary:3}
\end{cor}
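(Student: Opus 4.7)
The plan is to reduce Corollary~\ref{corollary:3} to Corollary~\ref{corollary:m} by exhibiting, at each $\vec k$, an explicit gauge transformation that turns $h(\vec k)$ into a matrix whose entries depend only on $(a-b)\bmod 3$. Fix $\vec k$, write the common diagonal entry as $d$, and write the three upper-triangular off-diagonal entries as $h_{ab}=r\,e^{i\phi_{ab}}$ with common modulus $r\ge 0$. If $r=0$ then $h(\vec k)=d\,I$, so $\frac{1}{\sqrt 3}(1,1,1)^T$ is already an eigenvector of the form required by Theorem~\ref{theorem:mspiral} and the conclusion is immediate; assume henceforth $r>0$.

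Under the basis change $|a\rangle\mapsto e^{i\theta_a}|a\rangle$, the moduli and the diagonal are preserved while each off-diagonal phase shifts as $\phi_{ab}\mapsto\phi_{ab}+\theta_b-\theta_a$. The only gauge-invariant combination of the three phases is the plaquette flux $\Phi\equiv\phi_{12}+\phi_{23}-\phi_{13}$. I plan to choose $\theta_a$ so that in the new gauge $\tilde\phi_{12}=\tilde\phi_{23}=-\tilde\phi_{13}=\Phi/3$; this assignment is flux-consistent since $\Phi/3+\Phi/3-(-\Phi/3)=\Phi$. Concretely, the choice $\theta_1=0$, $\theta_2=\Phi/3-\phi_{12}$, $\theta_3=2\Phi/3-\phi_{12}-\phi_{23}$ does the job, with the first two conditions fixing $\theta_2,\theta_3$ and the definition of $\Phi$ making the third condition automatic.

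After this gauge, $\tilde h(\vec k)$ is circulant with first row $(d,v,v^*)$, $v=r\,e^{i\Phi/3}$, so its entries depend only on $(a-b)\bmod 3$, and Corollary~\ref{corollary:m} immediately yields a spiral ground state. One can equivalently invoke Theorem~\ref{theorem:mspiral} directly: every $3\times 3$ circulant matrix is diagonalized by the discrete Fourier basis $\tilde\xi_\alpha=\frac{1}{\sqrt 3}(1,\omega^\alpha,\omega^{2\alpha})^T$ with $\omega=e^{2\pi i/3}$, and undoing the gauge converts each eigenvector of $h(\vec k)$ into one of the form $\frac{1}{\sqrt 3}(e^{-it_1},e^{-it_2},e^{-it_3})^T$, which is exactly the hypothesis needed at any spectral minimum. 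The only genuine subtlety is the flux-compatibility check in the second step; once $\Phi$ is identified as the sole gauge obstruction among the three off-diagonal phases, the symmetric choice $\Phi/3$ on each bond is forced and the rest of the argument is routine.
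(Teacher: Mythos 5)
Your proposal is correct and follows essentially the route the paper intends: the paper presents Corollary~\ref{corollary:3} as an instance of Corollary~\ref{corollary:m}, i.e.\ the stated conditions make $h(\vec k)$ gauge-equivalent to a circulant matrix, which the discrete Fourier basis diagonalizes into eigenvectors of the equal-modulus form required by Theorem~\ref{theorem:mspiral}. Your explicit identification of the plaquette flux $\Phi$ as the sole gauge obstruction and the symmetric choice $\Phi/3$ per bond is a correct and complete filling-in of the details the paper leaves implicit.
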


That $h(\vec k)_{ab}$ only depends on $(a-b) \mbox{ mod } m$ means that the sublattice degree of freedom has become an extra finite periodic dimension (think of nanotubes), and yet Corollaries~\ref{corollary:m}-\ref{corollary:3} allow this to be achieved after gauge transformations. The condition in Theorem~\ref{theorem:mspiral} is even weaker, as it only stipulates that $\xi_{\pm \vec k_1 \alpha_1}$ should take a particular form.

\begin{figure}[!t]
\centering
\includegraphics[width=2.8in]{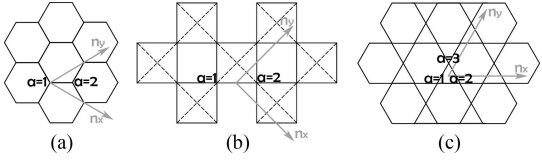}
\caption{Classic 2D models with multiple spins per unit cell. (a) The honeycomb $J_1 - J_2$ model, where $J_1, ~ J_2$ are coupling constants for NNs and next NNs, respectively. (b) The checkerboard $J_1 - J_2$ model, where $J_1, ~ J_2$ are coupling constants for solid and dashed edges, respectively. (c) The Kagome NN model with coupling constant $J_1$.}
\label{fig:2DclassicMulti}
\end{figure}
When the requirement in Theorem~\ref{theorem:mspiral} is met, the GSM is determined by the spectral minimum distribution and the $\xi_{\vec k \alpha}$ thereof. We omit the similar classification here and simply point out some tractable classic models. First, Corollary~\ref{corollary:2} is applicable to the honeycomb $J_1 - J_2$ model (Fig.~\ref{fig:2DclassicMulti}(a)) \cite{localFrame,honeycombJ1J2J3}. Second, the $h(\vec k)$ of the checkerboard $J_1 - J_2$ model (Fig.~\ref{fig:2DclassicMulti}(b)) has no $\sigma_z$ component along $|k_x| = |k_y|$, which happens to include some of the spectral minima in all cases \cite{checkerboardJ1J2AF}. Finally, for the Kagome NN model (Fig.~\ref{fig:2DclassicMulti}(c)), the conditions $h(\vec k)_{11} = h(\vec k)_{22} = h(\vec k)_{33}$ and $|h(\vec k)_{12}| = |h(\vec k)_{23}| = |h(\vec k)_{13}|$ are satisfied only at $\vec k = \vec 0$, which, luckily, is always a minimum \cite{kagomeJ1AF,kagomeJ1J2}. These models have extensive minimum distributions in some regimes, which are the topic of the following section.

Regarding robust non-coplanar spin orders, namely non-coplanar ground states that are not degenerate with any coplanar ones, we must search multi-spin models that do not fulfill the requirement in Theorem~\ref{theorem:mspiral}.

\emph{Extensive spectral minima.}---Spectra with infinitely many minima are intimately connected with disorder, localization, frustration, etc., and in these contexts knowing the entire GSMs is a primary goal. In retrospect, we have determined GSMs by putting together the legitimacy conditions, which enforce unit spin length, and the energy minimization conditions, which rule out certain $\vec S_{\vec k}$ or $\vec S_{\vec k \alpha}$. The legitimacy conditions are decoupled in $\vec n$ space and coupled in $\vec k$ space, and previously we have sacrificed the decoupled form for the simplicity of the energy minimization conditions in $\vec k$ space. In the case of extensive spectral minima, the latter conditions often remain simple in real space. It is hence no good idea to adhere to $\vec S_{\vec k}$ or $\vec S_{\vec k \alpha}$. (Nonetheless, passing to $\vec k$ space is still an important intermediate step, because otherwise we cannot even determine the energy minimization conditions.)

\begin{figure}[!t]
\centering
\includegraphics[width=2.8in]{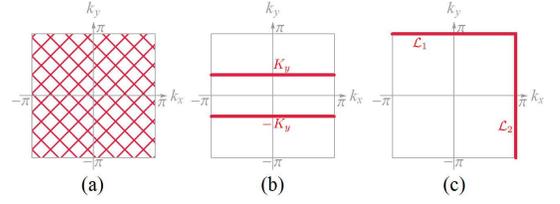}
\caption{Spectral minima covering (a) the entire BZ, (b) two horizontal lines in the BZ, and (c) the BZ boundaries.}
\label{fig:2Dinfinite}
\end{figure}
To illustrate the point, consider the distributions in Fig.~\ref{fig:2Dinfinite} for $N_x \times N_y$ lattices with one spin per unit cell. For Fig.~\ref{fig:2Dinfinite}(a), in terms of $\vec S_{\vec n}$, the legitimacy conditions are simply Eq.~(\ref{1_domain}), and energy minimization says nothing, so the GSM is given by Eq.~(\ref{1_domain}). For Fig.~\ref{fig:2Dinfinite}(b), defining $\vec S_{n_x k_y} \equiv \frac{1}{\sqrt{N_x}} \sum_{k_x} \vec S_{\vec k} e^{i k_x n_x}$, we derive the conditions $\sum_{k_y} \vec S_{n_x k_y}^\dag \vec S_{n_x k_y + p_y} = N_y \delta_{0 p_y}$, $\vec S_{n_x, -k_y} = \vec S_{n_x, k_y}^*$ for legitimacy, and $\vec S_{n_x k_y} = \vec 0, ~ \forall k_y \neq \pm K_y$ for energy minimization, whence the GSM can be trivially computed. Now consider the less trivial situation Fig.~\ref{fig:2Dinfinite}(c), where the interaction is not purely local in any direction, even when restricted to the GSM. This distribution arises in, for instance, the spectrum $\epsilon_{\vec k} = \cos k_x + \cos k_y + \frac 12 [\cos(k_x - k_y) + \cos(k_x + k_y)]$. Energy minimization can be incorporated into the variables $\vec S_{n_x, k_y= \pi} \equiv \frac{1}{\sqrt {N_x}} \sum_{k_x} \vec S_{k_x, k_y=\pi} e^{i k_x n_x}, ~ \vec S_{k_x = \pi, n_y} \equiv \frac{1}{\sqrt{N_y}} \sum_{k_y} \vec S_{k_x=\pi, k_y} e^{i k_y n_y}$, which are independent except that the two $\vec S_{\vec k=(\pi,\pi)}$ must be identified. Plugging them into Eq.~(\ref{1_image}), we find that the GSM is given by \cite{supp},
\begin{eqnarray}
&&\mu_{n_x}, \nu_{n_y}, \vec S_{\vec k = (\pi,\pi)} \in \mathbb R^3, \nonumber \\
&&N_x|\mu_{n_x}|^2 + N_y |\nu_{n_y}|^2 = N + |\vec S_{\vec k= (\pi,\pi)}|^2, \nonumber \\
&&|\mu_{n_x}| \mbox{ independent of } n_x, ~~ |\nu_{n_y}| \mbox{ independent of } n_y, \nonumber \\
&&\frac{1}{\sqrt N} \sum_{\vec n} \mu_{n_x} \cdot \nu_{n_y} e^{-i \vec p \cdot \vec n} = 0, ~ \forall p_x, p_y \neq 0, \nonumber \\
&&\frac{1}{\sqrt{N_x}} \sum_{n_x} \mu_{n_x} = \frac{1}{\sqrt {N_y}} \sum_{n_y} \nu_{n_y} = \vec S_{\vec k=(\pi,\pi)}, \label{e_1cGSM}
\end{eqnarray}
where $\mu_{n_x} \equiv (-)^{n_x} \vec S_{n_x, k_y = \pi}$ and $\nu_{n_y} \equiv (-)^{n_y} \vec S_{k_x = \pi, n_y}$, which is not just the sum of the two submanifolds developed from the edges $\mathcal L_1, \mathcal L_2$ individually. (We will obtain a simpler subset of the GSM if we extend the fourth line of Eq.~(\ref{e_1cGSM}) to all $\vec p \neq \vec 0$. In that case $\mu_{n_x} \cdot \nu_{n_y}$ is independent of $n_x, n_y$, and we can easily enumerate all the possible orientations of $\mu_{n_x}$ and $\nu_{n_y}$.)

In the multi-spin case, to preserve the simplicity of the energy minimization conditions, we undo the Fourier transform but not the unitary transformation by $\xi_{\vec k \alpha}$, thereby bringing in convolution structures, which can be handled numerically. Consider the minimum distributions in Fig.~\ref{fig:2Dinfinite} for non-degenerate lowest band $\alpha = 1$, and assume the requirement in Theorem~\ref{theorem:mspiral} is met. The GSM for Fig.~\ref{fig:2Dinfinite}(a) is given by \cite{supp}
\begin{eqnarray}
(\vec S \ast \xi)_{\vec n, \alpha=1, a}^\dag (\vec S \ast \xi)_{\vec n, \alpha=1, a} = 1, ~~
\vec S_{\vec n, \alpha = 1} \in \mathbb R^3, \label{e_maGSM}
\end{eqnarray}
where
\begin{eqnarray}
(\vec S \ast \xi)_{\vec n \alpha a} &\equiv& \frac{1}{\sqrt N} \sum_{\vec n'} \vec S_{\vec n' \alpha} \xi_{\vec n - \vec n' \alpha a}, \label{e_mSxin}\\
\xi_{\vec n \alpha} &\equiv& \frac{1}{\sqrt{N}} \sum_{\vec k} \xi_{\vec k \alpha} e^{i \vec k \cdot \vec n}, \label{e_mxin} \\
\vec S_{\vec n \alpha} &\equiv& \frac{1}{\sqrt{N}} \sum_{\vec k} \vec S_{\vec k \alpha} e^{i \vec k \cdot \vec n}. \label{e_mSn}
\end{eqnarray}
Note that we must take $\vec S_{\vec n, \alpha = 1}$, not $(\vec S \ast \xi)_{\vec n, \alpha=1, a}$, as independent variables in Eq.~(\ref{e_maGSM}). Similarly, the GSM for Fig.~\ref{fig:2Dinfinite}(b) is given by \cite{supp}
\begin{eqnarray}
&&\sum_{k_y} (\vec S \ast \xi)_{n_x k_y, \alpha=1, a}^\dag (\vec S \ast \xi)_{n_x k_y+p_y, \alpha=1, a} = N_y \delta_{0 p_y}, \nonumber\\
&&\vec S_{n_x,-K_y, \alpha = 1} = \vec S_{n_x, K_y, \alpha=1}^*, \label{e_mbGSM}
\end{eqnarray}
where
\begin{eqnarray}
(\vec S \ast \xi)_{n_x k_y \alpha a} &\equiv& \frac{1}{\sqrt {N_x}} \sum_{n_x'} \vec S_{n_x' k_y \alpha} \xi_{n_x - n_x' k_y \alpha a}, \label{e_mSxinxky}\\
\xi_{n_x k_y \alpha} &\equiv& \frac{1}{\sqrt{N_x}} \sum_{k_x} \xi_{\vec k \alpha} e^{i k_x n_x}, \label{e_mxinxky} \\
\vec S_{n_x k_y \alpha} &\equiv& \frac{1}{\sqrt{N_x}} \sum_{k_x} \vec S_{\vec k \alpha} e^{i k_x n_x}. \label{e_mSnxky}
\end{eqnarray}
As for Fig.~\ref{fig:2Dinfinite}(c), we get \cite{supp}
\begin{align}
&\vec S_{n_x, k_y=\pi, \alpha =1}, ~ \vec S_{k_x=\pi, n_y, \alpha =1}, ~ \vec S_{\vec k = (\pi,\pi), \alpha=1} \in \mathbb R^3, \nonumber \\
&N_x|\mu_{n_x a}|^2 + N_y |\nu_{n_y a}|^2 = N + |\vec S_{\vec k = (\pi,\pi),\alpha=1} \xi_{\vec k = (\pi,\pi),\alpha=1,a}|^2, \nonumber \\
&|\mu_{n_x a}| \mbox{ independent of } n_x, ~~ |\nu_{n_y a}| \mbox{ independent of } n_y, \nonumber \\
&\frac{1}{\sqrt N} \sum_{\vec n} \mu_{n_x a} \cdot \nu_{n_y a} e^{-i \vec p \cdot \vec n} =0, ~ \forall p_x, p_y \neq 0, \nonumber \\
&\frac{1}{\sqrt{N_x}} \sum_{n_x} (-)^{n_x} \vec S_{n_x, k_y=\pi, \alpha =1} = \vec S_{\vec k = (\pi,\pi),\alpha=1}, \nonumber \\
&\frac{1}{\sqrt{N_y}} \sum_{n_y} (-)^{n_y} \vec S_{k_x=\pi, n_y, \alpha =1} = \vec S_{\vec k = (\pi,\pi),\alpha=1}, \label{e_mcGSM}
\end{align}
where
\begin{eqnarray}
\mu_{n_x a} &\equiv& (-)^{n_x} (\vec S \ast \xi)_{n_x, k_y=\pi, \alpha = 1, a}, \label{e_mcmunxa} \\
\nu_{n_y a} &\equiv& (-)^{n_y} (\vec S \ast \xi)_{k_x = \pi, n_y, \alpha = 1, a}, \label{e_mcnunya} \\
(\vec S \ast \xi)_{k_x n_y \alpha a} &\equiv& \frac{1}{\sqrt {N_y}} \sum_{n_y'} \vec S_{k_x n_y' \alpha} \xi_{k_x n_y - n_y' \alpha a}, \label{e_mSxikxny}\\
\xi_{k_x n_y \alpha} &\equiv& \frac{1}{\sqrt{N_y}} \sum_{k_y} \xi_{\vec k \alpha} e^{i k_y n_y}, \label{e_mxikxny} \\
\vec S_{k_x n_y \alpha} &\equiv& \frac{1}{\sqrt{N_y}} \sum_{k_y} \vec S_{\vec k \alpha} e^{i k_y n_y}. \label{e_mSkxny}
\end{eqnarray}

\begin{figure}[!t]
\centering
\includegraphics[width=2in]{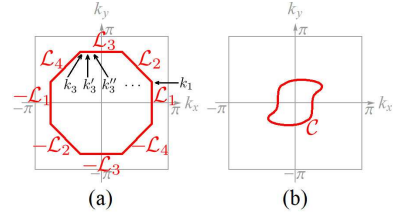}
\caption{Examples of more complicated spectral minimum trajectories.}
\label{fig:2DinfiniteIrregular}
\end{figure}
For even more complicated minimum distributions, the GSMs may be inferred from the sub-GSMs obtained numerically from finite meshes on the minimum trajectories or zones. For this purpose, prior analytical analysis is often labor-saving. Consider for instance the spectral minimum distribution in Fig.~\ref{fig:2DinfiniteIrregular}(a) for one-spin case, and suppose that in a ground state we have $\vec S_{\pm \vec k_1} \neq \vec 0$. Then in light of the fourth entry in Table~\ref{table:2Dfinite}, which states that two generic pairs of minima interfere each other \cite{supp}, we can convince ourselves that $\vec S_{\pm \vec k_3} = \vec 0$, and by induction, $\vec S_{\pm \vec k_3'}, \vec S_{\pm \vec k_3''}, \ldots$ and eventually all $\vec S_{\vec k}$ on $\pm \mathcal L_3$ must vanish. It follows that if any point, not necessarily $\vec k_1$, on edge $\mathcal L_1$ contributes to a ground state, then the entire $\mathcal L_3$ does not. Analysis like this definitely simplifies the numerics. In the limit of infinite number of edges, that is when the trajectory becomes such a random curve $\mathcal C$ as the one in Fig.~\ref{fig:2DinfiniteIrregular}(b), no two pairs are compatible, and the GSM is simply
\begin{equation}
\vec S_{\vec n} = \rho \begin{pmatrix} \cos(\vec k_j \cdot \vec n) \\ -\sin(\vec k_j \cdot \vec n) \\ 0 \end{pmatrix}, ~~ \rho \in SO_3, ~~\vec k_j \in \mathcal C.
\end{equation}

We remark that the seemingly oversimplified situations exemplified by Fig.~\ref{fig:2Dinfinite} and Fig.~\ref{fig:2DinfiniteIrregular}(b) actually cover a few familiar models. Fig.~\ref{fig:2Dinfinite}(a) is the case of the checkerboard model (see Fig.~\ref{fig:2DclassicMulti}; same below) with $J_1 \neq 0, J_2 = |J_1|$ \cite{checkerboardJ1J2AF}, and the Kagome model with $J_1 > 0$ \cite{kagomeJ1AF,kagomeJ1J2}. Fig.~\ref{fig:2Dinfinite}(c) occurs in the square lattice $J_1- J_2$ model with $J_1 > 0, J_2 = \frac 12 J_1$ (see e.g. \cite{squareJ1J2AF,squareJ1J2F}), and the checkerboard model with $J_2 > |J_1|$ \cite{checkerboardJ1J2AF}. The square lattice model with $J_1 < 0, J_2 = \frac 12 |J_1|$ (see e.g. \cite{squareJ1J2F}), and the checkerboard model with $J_1 = 0, J_2 <0$, have a similar distribution in which the minima lie on $k_x = 0$ and $k_y = 0$. Last but not least, Fig.~\ref{fig:2DinfiniteIrregular}(b) reflects the minimum distributions of the honeycomb model with $J_1 \neq 0$ and $0.17 |J_1| < J_2 < 0.48|J_1|$, for which there is one loop, or $J_2 > 0.48|J_1|$, for which there are two loops (numbers are approximate) \cite{localFrame,honeycombJ1J2J3}.

\emph{Anisotropic interactions.}---Consider XXZ Hamiltonians of the following form:
\begin{eqnarray}
H = \frac 12 \sum_{\vec n} \sum_{\eta} J_{\vec \eta} \vec S_{\vec n}^\dag \Theta \vec S_{\vec n + \vec \eta}, \label{a_1}
\end{eqnarray}
or
\begin{equation}
H = \frac 12 \sum_{\vec n} \sum_{\vec \eta} \sum_{a,b=1}^m J_{\vec \eta a b} \vec S_{\vec n a}^\dag \Theta \vec S_{\vec n + \vec \eta b}, \label{a_m}
\end{equation}
where $\Theta \equiv {\rm diag}\{ 1, 1, 1+\theta \}$ is independent of $\vec \eta$, $a$, and $b$. Assume the requirement in Theorem~\ref{theorem:mspiral} is met in the isotropic case. Then a positive $\theta$ simply contracts the GSM, whereas a negative $\theta$ may modify the ground states, which we do not plan to discuss. XY model is the XXZ model with $\theta = + \infty$ and can thus be put into our framework.

Z. Xiong thanks Q. Lin for helpful comments. This work is supported by NSF Grant No. DMR-1005541 and NSFC 11074140.

\end{document}


\title{Supplemental Material for ``General method for finding ground state manifold of classical Heisenberg model"}

\author{Zhaoxi Xiong$^{1,}$}
\email{xiong@mit.edu}
\author{Xiao-Gang Wen$^{1,2}$}
\affiliation{$^1$Department of Physics, Massachusetts Institute of Technology, Cambridge, Massachusetts 02139, USA\\
$^2$Perimeter Institute for Theoretical Physics, Waterloo, Ontario, N2L 2Y5 Canada}
\date{\today}

\maketitle

\section{SELECTED EXAMPLES FROM TABLE~I}

\subsection{Distribution with one generic pair of minima}

Let $\pm \vec k_1$ be the spectral minima, and set $\vec S_{\pm \vec k_1} = x \pm i u$, where $x$ and $u$ are real, 3-component column vectors. The second condition in Eq.~(6) is automatically satisfied. The third condition stipulates
\begin{equation}
|x|^2 + |u|^2 = \frac N2 \label{1generic_p=0}
\end{equation}
for $\vec p = \vec 0$, and
\begin{equation}
|x|^2 - |u|^2 = 0, ~~ x \cdot u = 0 \label{1generic_p=2k1}
\end{equation}
for $\vec p = 2 \vec k_1$ or $- 2 \vec k_1$. The solutions are apparently
\begin{equation}
\vec S_{\pm \vec k_1} = \rho \frac{\sqrt N}{2} \begin{pmatrix} 1 \\ \pm i \\ 0 \end{pmatrix}, \label{1generic_Sk}
\end{equation}
or in real space,
\begin{equation}
\vec S_{\vec n} = \rho \begin{pmatrix} \cos(\vec k_1 \cdot \vec n) \\ - \sin (\vec k_1 \cdot \vec n) \\ 0 \end{pmatrix}, \label{1generic_Sn}
\end{equation}
where $\rho \in SO_3$ is an arbitrary rotation.

\subsection{Distribution with two generic pairs of minima}

Let $\pm \vec k_1, ~ \pm \vec k_2$ be the spectral minima, and set $\vec S_{\pm \vec k_1} = x_1 \pm i u_1, ~ \vec S_{\pm \vec k_2} = x_2 \pm i u_2$, where $x_1, u_1, x_2, u_2$ are real, 3-component vectors. Plugging these into the third condition in Eq.~(6), we get
\begin{align}
&|x_1|^2 + |u_1|^2 + |x_2|^2 + |u_2|^2 = \frac N2 & (\vec p = \vec 0), \label{2generic_p=0}\\
&|x_1|^2 - |u_1|^2 = 0, ~ x_1 \cdot u_1 = 0 & (\vec p = \pm 2 \vec k_1), \label{2generic_p=2k1}\\
&|x_2|^2 - |u_2|^2 = 0, ~ x_2 \cdot u_2 = 0 & (\vec p = \pm 2 \vec k_2), \label{2generic_p=2k2}
\end{align}
and
\begin{align}
x_1 \cdot x_2 - u_1 \cdot u_2 = 0, ~ x_2 \cdot u_1 + x_1 \cdot u_2 = 0  & \nonumber \\
(\vec p = \pm (\vec k_1 &+ \vec k_2)), \label{2generic_p=k1+k2}\\
x_1 \cdot x_2 + u_1 \cdot u_2 = 0, ~ x_2 \cdot u_1 - x_1 \cdot u_2 = 0  & \nonumber \\
(\vec p = \pm (\vec k_1 &- \vec k_2)). \label{2generic_p=k1-k2}
\end{align}
Eqs.~(\ref{2generic_p=0})-(\ref{2generic_p=k1-k2}) simplify to
\begin{eqnarray}
&& |x_1| = |u_1|, ~ |x_2| = |u_2|, \\
&&|x_1|^2 + |u_1|^2 + |x_2|^2 + |u_2|^2 = \frac N2, \\
&& x_1, ~ u_1, ~ x_2, ~ u_2 \mbox{ are mutually orthogonal.}
\end{eqnarray}
No four nontrivial, mutually orthogonal vectors can coexist in $\mathbb R^3$. Therefore, either $\vec S_{\pm \vec k_1}$ or $\vec S_{\pm \vec k_2}$ need to vanish. The ground state manifold (GSM) must be
\begin{equation}
\vec S_{\vec n} = \rho \begin{pmatrix} \cos(\vec k_j \cdot \vec n) \\ - \sin (\vec k_j \cdot \vec n) \\ 0 \end{pmatrix}, ~~~ \rho \in SO_3, ~ j = 1,2. \label{2generic_Sn}
\end{equation}

\subsection{Distribution in Fig.~2(i)}

Define $\vec k_1 \equiv (0,0)$ and $\vec k_2 \equiv (\pi, 0)$. The second condition in Eq.~(6) implies $\vec S_{\vec k_1}$ and $\vec S_{\vec k_2}$ are real. The third condition requires
\begin{align}
& |\vec S_{\vec k_1}|^2 + |\vec S_{\vec k_2}|^2 = N & (\vec p = \vec 0), \label{fig2i_p=0}\\
& \vec S_{\vec k_1} \cdot \vec S_{\vec k_2} = 0 & (\vec p = (\pi, 0)). \label{fig2i_p=(pi,0)}
\end{align}
The solutions are
\begin{equation}
\vec S_{\vec k_1} = \rho \sqrt N \begin{pmatrix} \cos \gamma \\ 0 \\ 0 \end{pmatrix}, ~~ \vec S_{\vec k_2} = \rho \sqrt N \begin{pmatrix} 0 \\ \sin \gamma \\ 0 \end{pmatrix}, \label{fig2i_Sk}
\end{equation}
where $\rho \in SO_3$ and $\gamma \in \mathbb R$, so the GSM is
\begin{equation}
\vec S_{\vec n} = \rho \begin{pmatrix} \cos \gamma \\ (-)^{n_x} \sin \gamma \\ 0 \end{pmatrix}. \label{fig2i_Sn}
\end{equation}

\section{EXTENSIVE SPECTRAL MINIMUM DISTRIBUTIONS}

\subsection{Derivation of Eq.~(14)}

Define two families of new variables:
\begin{eqnarray}
\vec S_{n_x k_y} &=& \frac{1}{\sqrt {N_x}} \sum_{k_x} \vec S_{\vec k} e^{i k_x n_x}, \label{eq14_Snxky} \\
\vec S_{k_x n_y} &=& \frac{1}{\sqrt{N_y}} \sum_{k_y} \vec S_{\vec k} e^{i k_y n_y}. \label{eq14_Skxny}
\end{eqnarray}
The energy minimization conditions can be stated as follows: $\vec S_{n_x k_y} = \vec 0, ~ \forall k_y \neq \pi$, and $\vec S_{k_x n_y} = \vec 0, ~ \forall k_x \neq \pi$. Now we want to express the conditions in Eq.~(6) in terms of $\vec S_{n_x,k_y = \pi}$, $\vec S_{k_x=\pi, n_y}$, and $\vec S_{\vec k = (\pi,\pi)}$. The first two conditions go over into
\begin{eqnarray}
&&\vec S_{n_x,k_y = \pi}, ~ \vec S_{k_x=\pi, n_y}, ~ \vec S_{\vec k = (\pi,\pi)} \in \mathbb R^3, \label{eq14_real} \\
&&\frac{1}{\sqrt{N_x}} \sum_{n_x} (-)^{n_x} \vec S_{n_x, k_y=\pi} = \vec S_{\vec k=(\pi,\pi)} \label{eq14_consistencynxky}\\
&&\frac{1}{\sqrt {N_y}} \sum_{n_y} (-)^{n_y} \vec S_{k_x=\pi, n_y} = \vec S_{\vec k=(\pi,\pi)} \label{eq14_consistencykxny}
\end{eqnarray}
The third condition says, for $\vec p = \vec 0$,
\begin{eqnarray}
&&N + |\vec S_{\vec k = (\pi, \pi)}|^2 \nonumber\\
&=& \sum_{k_x} \vec S_{k_x,k_y=\pi}^\dag \vec S_{k_x,k_y=\pi} + \sum_{k_y} \vec S_{k_x=\pi,k_y}^\dag \vec S_{k_x=\pi,k_y} \nonumber\\
&=& \sum_{n_x} \vec S_{n_x,k_y=\pi}^\dag \vec S_{n_x,k_y=\pi} + \sum_{n_y} \vec S_{k_x=\pi,n_y}^\dag \vec S_{k_x=\pi,n_y} \nonumber\\
&=& \sum_{n_x} |\vec S_{n_x,k_y=\pi}|^2 + \sum_{n_y} |\vec S_{k_x=\pi,n_y}|^2. \label{eq14_p=0}
\end{eqnarray}
For $p_x \neq 0, p_y = 0$, only $\mathcal L_1$ contributes (see Fig.~4). Thus,
\begin{eqnarray}
0 &=& \sum_{k_x} \vec S_{k_x,k_y=\pi}^\dag \vec S_{k_x + p_x, k_y=\pi} \nonumber\\
&=& \frac{1}{N_x} \sum_{k_x} \sum_{n_x'} \vec S_{n_x', k_y=\pi} e^{i k_x n_x'} \sum_{n_x} \vec S_{n_x, k_y=\pi} e^{-i (k_x+p_x) n_x} \nonumber\\
&=& \sum_{n_x} \vec S_{n_x,k_y=\pi}^\dag \vec S_{n_x,k_y=\pi} e^{- i p_x n_x}, \label{eq14_p=(px,0)}
\end{eqnarray}
which means that $|\vec S_{n_x,k_y=\pi}|^2$ is independent of $n_x$. Similarly, for $p_x = 0, p_y \neq 0$, we have
\begin{equation}
0 = \sum_{n_y} \vec S_{k_x=\pi,n_y}^\dag \vec S_{k_x=\pi,n_y} e^{- i p_y n_y}, \label{eq14_p=(0,py)}
\end{equation}
which means $|\vec S_{k_x=\pi,n_y}|^2$ is independent of $n_y$. Lastly, a $\vec p$ such that $p_x \neq 0, p_y \neq 0$ couples one point on $\mathcal L_1$ to another point on $\mathcal L_2$. Thus, we have
\begin{eqnarray}
0 &=& \vec S_{k_x=\pi-p_x, k_y = \pi}^\dag \vec S_{k_x=\pi, k_y = \pi+p_y} \nonumber\\
&=& \frac{1}{\sqrt{N}} \sum_{n_x} \vec S_{n_x,k_y=\pi}^\dag e^{i (\pi - p_x) n_x} \sum_{n_y} \vec S_{k_x=\pi,n_y} e^{-i (\pi + p_y) n_y} \nonumber\\
&=& \frac{1}{\sqrt N} \sum_{n_x, n_y} (-)^{n_x+n_y} \vec S_{n_x,k_y=\pi}^\dag \vec S_{k_x=\pi,n_y} e^{-i \vec p \cdot \vec n}, \label{eq14_p=(px,py)}
\end{eqnarray}
In summary, the GSM is given by the following,
\begin{eqnarray}
&&\mu_{n_x}, \nu_{n_y}, \vec S_{\vec k = (\pi,\pi)} \in \mathbb R^3, \nonumber \\
&&N_x|\mu_{n_x}|^2 + N_y |\nu_{n_y}|^2 = N + |\vec S_{\vec k= (\pi,\pi)}|^2, \nonumber \\
&&|\mu_{n_x}| \mbox{ independent of } n_x, ~~ |\nu_{n_y}| \mbox{ independent of } n_y, \nonumber \\
&&\frac{1}{\sqrt{N}} \sum_{\vec n} \mu_{n_x} \cdot \nu_{n_y} e^{-i \vec p \cdot \vec n} = 0, ~ \forall p_x, p_y \neq 0, \nonumber \\
&&\frac{1}{\sqrt{N_x}} \sum_{n_x} \mu_{n_x} = \frac{1}{\sqrt {N_y}} \sum_{n_y} \nu_{n_y} = \vec S_{\vec k=(\pi,\pi)}, \label{eq14_GSM}
\end{eqnarray}
where $\mu_{n_x} \equiv (-)^{n_x} \vec S_{n_x, k_y = \pi}$ and $\nu_{n_y} \equiv (-)^{n_y} \vec S_{k_x = \pi, n_y}$.

\subsection{Derivation of Eq.~(15)}

Define $\vec S_{\vec n \alpha} \equiv \frac{1}{\sqrt N} \sum_{\vec k} \vec S_{\vec k \alpha} e^{i \vec k \cdot \vec n}$ and $\xi_{\vec n \alpha} \equiv \frac{1}{\sqrt N} \sum_{\vec k} \xi_{\vec k \alpha} e^{i \vec k \cdot \vec n}$. The energy minimization conditions are just $\vec S_{\vec n \alpha} = \vec 0, \forall \alpha \neq 1$. As for legitimacy, the first two conditions in Eq.~(13) become
\begin{equation}
\vec S_{\vec n \alpha} \in \mathbb R^3. \label{eq15_real}
\end{equation}
Fourier transforming the third condition, we find
\begin{eqnarray}
N &=& \frac 1N \sum_{\vec p} \sum_{\vec k} \sum_{\alpha',\alpha} \sum_{\vec n', \vec n} \xi_{\vec k \alpha' a}^* \vec S_{\vec n' \alpha'}^\dag \vec S_{\vec n \alpha} \xi_{\vec k + \vec p \alpha a} \nonumber \\
& \times & e^{i \vec k \cdot \vec n'} e^{-i (\vec k + \vec p) \cdot \vec n} e^{i \vec p \cdot \vec n''} \nonumber \\
&=& \frac{1}{\sqrt N} \sum_{\vec k} \sum_{\alpha', \alpha} \sum_{\vec n', \vec n} \xi_{\vec k \alpha' a}^* \vec S_{\vec n' \alpha'}^\dag \vec S_{\vec n \alpha} \xi_{\vec n'' - \vec n \alpha a} e^{-i \vec k \cdot (\vec n'' - \vec n')} \nonumber \\
&=& \left( \sum_{\vec n',\alpha'} \vec S_{\vec n' \alpha'}^\dag \xi_{\vec n''-\vec n' \alpha' a}^* \right) \left( \sum_{\vec n, \alpha} \vec S_{\vec n \alpha} \xi_{\vec n'' - \vec n \alpha a} \right),
\label{eq15_p}
\end{eqnarray}
for all $\vec n''$. Combining Eqs.~(\ref{eq15_real})(\ref{eq15_p}) with the energy minimization conditions, we get
\begin{eqnarray}
(\vec S \ast \xi)_{\vec n, \alpha=1, a}^\dag (\vec S \ast \xi)_{\vec n, \alpha=1, a} = 1, ~~
\vec S_{\vec n, \alpha = 1} \in \mathbb R^3,
\end{eqnarray}
which will determine the GSM, where $(\vec S \ast \xi)_{\vec n \alpha a} \equiv \frac{1}{\sqrt N} \sum_{\vec n'} \vec S_{\vec n' \alpha} \xi_{\vec n - \vec n' \alpha a}$.

\subsection{Derivation of Eq.~(16)}

Define $\vec S_{n_x k_y \alpha} \equiv \frac{1}{\sqrt{N_x}} \sum_{k_x} \vec S_{\vec k \alpha} e^{i k_x n_x}$. The energy minimization conditions are $\vec S_{n_x k_y \alpha} = \vec 0, ~ \forall k_y \neq \pm K_y$ or $\alpha \neq 1$. The first two conditions in Eq.~(13) go over into
\begin{equation}
\vec S_{n_x, -k_y, \alpha} = \vec S_{n_x,k_y,\alpha}^*. \label{eq16_conjugate}
\end{equation}
Fourier transforming the third condition in Eq.~(13) in the $x$ direction, we get
\begin{eqnarray}
N \delta_{0 p_y} &=& \frac{1}{N_x} \sum_{p_x} \sum_{\vec k} \sum_{\alpha',\alpha} \sum_{n_x', n_x} \xi_{\vec k \alpha' a}^* \vec S_{n_x' k_y \alpha'}^\dag \vec S_{n_x k_y+p_y \alpha}  \nonumber \\
& \times & \xi_{\vec k + \vec p \alpha a} e^{i k_x n_x'} e^{-i (k_x + p_x) n_x} e^{i p_x n_x''} \nonumber \\
&=& \frac{1}{\sqrt {N_x}} \sum_{\vec k} \sum_{\alpha', \alpha} \sum_{n_x', n_x} \xi_{\vec k \alpha' a}^* \vec S_{n_x' k_y \alpha'}^\dag \vec S_{n_x k_y+p_y \alpha} \nonumber \\
&\times& \xi_{n_x''-n_x k_y+p_y \alpha a} e^{-i k_x (n_x'' - n_x')} \nonumber \\
&=& \sum_{k_y} \left( \sum_{n_x',\alpha'} \vec S_{n_x' k_y \alpha'}^\dag \xi_{n_x''-n_x' k_y \alpha' a}^* \right) \nonumber \\
&& \left( \sum_{n_x, \alpha} \vec S_{n_x k_y+p_y \alpha} \xi_{n_x''-n_x k_y + p_y \alpha a} \right), \label{eq16_p}
\end{eqnarray}
where $\xi_{n_x k_y \alpha} \equiv \frac{1}{\sqrt{N_x}} \sum_{k_x} \xi_{\vec k \alpha} e^{i k_x n_x}$. Incorporating the energy minimization conditions into Eqs.~(\ref{eq16_conjugate})(\ref{eq16_p}), we get
\begin{eqnarray}
&&\sum_{k_y} (\vec S \ast \xi)_{n_x k_y, \alpha=1, a}^\dag (\vec S \ast \xi)_{n_x k_y+p_y, \alpha=1, a} = N_y \delta_{0 p_y}, \nonumber\\
&&\vec S_{n_x,-K_y, \alpha = 1} = \vec S_{n_x, K_y, \alpha=1}^*, \label{eq16_GSM}
\end{eqnarray}
where $(\vec S \ast \xi)_{n_x k_y \alpha a} \equiv \frac{1}{\sqrt {N_x}} \sum_{n_x'} \vec S_{n_x' k_y \alpha} \xi_{n_x - n_x' k_y \alpha a}$.

\subsection{Derivation of Eq.~(17)}

Define
\begin{eqnarray}
\vec S_{n_x k_y \alpha} &=& \frac{1}{\sqrt{N_x}} \sum_{k_x} \vec S_{\vec k \alpha} e^{i k_x n_x}, \label{eq17_Snxky} \\
\vec S_{k_x n_y \alpha} &=& \frac{1}{\sqrt{N_y}} \sum_{k_y} \vec S_{\vec k \alpha} e^{i k_y n_y}, \label{eq17_Skxny}
\end{eqnarray}
and
\begin{eqnarray}
\xi_{n_x k_y \alpha} &\equiv& \frac{1}{\sqrt{N_x}} \sum_{k_x} \xi_{\vec k \alpha} e^{i k_x n_x}, \label{eq17_xinxky} \\
\xi_{k_x n_y \alpha} &\equiv& \frac{1}{\sqrt{N_y}} \sum_{k_y} \xi_{\vec k \alpha} e^{i k_y n_y}, \label{eq17_xikxny}
\end{eqnarray}
Energy minimization requires that $\vec S_{n_x k_y \alpha} = \vec 0, ~ \forall k_y \neq \pi$ or $\alpha \neq 1$, and that $\vec S_{k_x n_y \alpha} = \vec 0, ~\forall k_x \neq \pi$ or $\alpha \neq 1$. Consistency implies
\begin{eqnarray}
\frac{1}{\sqrt{N_x}} \sum_{n_x} (-)^{n_x} \vec S_{n_x, k_y=\pi, \alpha =1} &=& \vec S_{\vec k = (\pi,\pi),\alpha=1}, \label{eq17_consistencySnxky} \\
\frac{1}{\sqrt{N_y}} \sum_{n_y} (-)^{n_y} \vec S_{k_x=\pi, n_y, \alpha =1} &=& \vec S_{\vec k = (\pi,\pi),\alpha=1}. \label{eq17_consistencySkxny}
\end{eqnarray}
Again,
\begin{equation}
\vec S_{n_x, k_y=\pi, \alpha =1}, ~ \vec S_{k_x=\pi, n_y, \alpha =1}, ~ \vec S_{\vec k = (\pi,\pi),\alpha=1} \in \mathbb R^3.
\end{equation}
Now we plug $\vec S_{n_x, k_y=\pi, \alpha =1}, \vec S_{k_x=\pi, n_y,\alpha=1}$ into the last condition in Eq.~(13). For $\vec p = \vec 0$, we have
\begin{widetext}
\begin{eqnarray}
&&N + |\vec S_{\vec k = (\pi,\pi), \alpha = 1}|^2 |\xi_{\vec k = (\pi,\pi),\alpha=1,a}|^2 \nonumber\\
&=& \sum_{k_x} |\vec S_{k_x,k_y=\pi,\alpha=1}|^2 |\xi_{k_x,k_y=\pi,\alpha=1,a}|^2
+ \sum_{k_y} |\vec S_{k_x=\pi,k_y,\alpha=1}|^2 |\xi_{k_x=\pi,k_y,\alpha=1,a}|^2 \nonumber\\
&=& \Bigg( \frac{1}{N_x^2} \sum_{k_x} \sum_{n_x',n_x} \sum_{n_x''',n_x''} \vec S_{n_x',k_y=\pi,\alpha=1}^\dag \vec S_{n_x,k_y=\pi,\alpha=1}
\xi_{n_x''',k_y=\pi,\alpha=1,a} \xi_{n_x'',k_y=\pi,\alpha=1,a} e^{ik_x(n_x'-n_x+n_x'''-n_x'')} \Bigg)
+ \ldots \nonumber\\
&=& \Bigg( \frac{1}{N_x} \sum_{n_x',n_x} \sum_{n_x''',n_x''} \vec S_{n_x',k_y=\pi,\alpha=1}^\dag \vec S_{n_x,k_y=\pi,\alpha=1}
\xi_{n_x''',k_y=\pi,\alpha=1,a} \xi_{n_x'',k_y=\pi,\alpha=1,a} \delta_{n_x'-n_x+n_x'''-n_x''} \Bigg)
+ \ldots \nonumber\\
&=& \sum_{n_x} |(\vec S \ast \xi)_{n_x,k_y=\pi,\alpha=1,a}|^2
+ \sum_{n_y} |(\vec S \ast \xi)_{k_x=\pi,n_y,\alpha=1,a}|^2, \label{eq17_p=0}
\end{eqnarray}
where $(\vec S \ast \xi)_{n_x k_y \alpha a} \equiv \frac{1}{\sqrt {N_x}} \sum_{n_x'} \vec S_{n_x' k_y \alpha} \xi_{n_x - n_x' k_y \alpha a}$ and $(\vec S \ast \xi)_{k_x n_y \alpha a} \equiv \frac{1}{\sqrt {N_y}} \sum_{n_y'} \vec S_{k_x n_y' \alpha} \xi_{k_x n_y - n_y' \alpha a}$. For $p_x \neq 0, p_y = 0$, convolutions arise in a similar way:
\begin{eqnarray}
0 &=& \sum_{k_x} \vec S_{k_x,k_y=\pi,\alpha=1}^\dag \vec S_{k_x+p_x,k_y=\pi,\alpha=1}
\xi_{k_x,k_y=\pi,\alpha=1,a}^* \xi_{k_x+p_x,k_y=\pi,\alpha=1,a} \nonumber\\
&=& \frac{1}{N_x^2} \sum_{k_x} \sum_{n_x',n_x} \sum_{n_x''',n_x''} \vec S_{n_x',k_y=\pi,\alpha=1}^\dag \vec S_{n_x,k_y=\pi,\alpha=1}
\xi_{n_x''',k_y=\pi,\alpha=1,a}^* \xi_{n_x'',k_y=\pi,\alpha=1,a}
e^{ik_x(n_x'+n_x''')} e^{-i (k_x+p_x) (n_x + n_x'')} \nonumber\\
&=& \sum_{n_x} |(\vec S \ast \xi)_{n_x,k_y=\pi,\alpha=1,a}|^2 e^{-i p_x n_x}, \label{eq17_p=(px,0)}
\end{eqnarray}
which means $|(\vec S \ast \xi)_{n_x,k_y=\pi,\alpha=1,a}|^2$ is independent of $n_x$. For $p_x=0, p_y \neq 0$, we have
\begin{equation}
0 = \sum_{n_y} |(\vec S \ast \xi)_{k_x=\pi,n_y,\alpha=1,a}|^2 e^{-i p_y n_y}, \label{eq17_p=(0,py)}
\end{equation}
which means $|(\vec S \ast \xi)_{k_x=\pi,n_y,\alpha=1,a}|^2$ is independent of $n_y$. For $p_x \neq 0, p_y \neq 0$, we have
\begin{eqnarray}
0 &=& \vec S_{k_x=\pi-p_x,k_y=\pi,\alpha=1}^\dag \vec S_{k_x=\pi,k_y=\pi+p_y,\alpha=1}
\xi_{k_x=\pi-p_x,k_y=\pi,\alpha=1,a}^* \xi_{k_x=\pi,k_y=\pi+p_y,\alpha=1,a} \nonumber\\
&=& \frac 1N \sum_{n_x,n_y} \sum_{n_x',n_y'} \vec S_{n_x,k_y=\pi,\alpha=1}^\dag \vec S_{k_x=\pi,n_y,\alpha=1}
\xi_{n_x',k_y=\pi,\alpha=1,a}^* \xi_{k_x=\pi,n_y',\alpha=1,a}
e^{i(\pi-p_x)(n_x+n_x')} e^{-i(\pi+p_y)(n_y+n_y')} \nonumber\\
&=& \frac{1}{\sqrt N} \sum_{\vec n} (\vec S \ast \xi)_{n_x,k_y=\pi,\alpha=1,a}^\dag (\vec S \ast \xi)_{k_x=\pi,n_y,\alpha=1,a}
(-)^{n_x+n_y} e^{- i \vec p \cdot \vec n}. \label{eq17_p=(px,py)}
\end{eqnarray}
\end{widetext}
Summarizing, the GSM is determined by the following:
\begin{align}
&\vec S_{n_x, k_y=\pi, \alpha =1}, ~ \vec S_{k_x=\pi, n_y, \alpha =1}, ~ \vec S_{\vec k = (\pi,\pi), \alpha=1} \in \mathbb R^3, \nonumber \\
&N_x|\mu_{n_x a}|^2 + N_y |\nu_{n_y a}|^2 = N + |\vec S_{\vec k = (\pi,\pi),\alpha=1} \xi_{\vec k = (\pi,\pi),\alpha=1,a}|^2, \nonumber \\
&|\mu_{n_x a}| \mbox{ independent of } n_x, ~~ |\nu_{n_y a}| \mbox{ independent of } n_y, \nonumber \\
&\frac{1}{\sqrt N} \sum_{\vec n} \mu_{n_x a} \cdot \nu_{n_y a} e^{-i \vec p \cdot \vec n} =0, ~ \forall p_x, p_y \neq 0, \nonumber \\
&\frac{1}{\sqrt{N_x}} \sum_{n_x} (-)^{n_x} \vec S_{n_x, k_y=\pi, \alpha =1} = \vec S_{\vec k = (\pi,\pi),\alpha=1}, \nonumber \\
&\frac{1}{\sqrt{N_y}} \sum_{n_y} (-)^{n_y} \vec S_{k_x=\pi, n_y, \alpha =1} = \vec S_{\vec k = (\pi,\pi),\alpha=1}, \label{eq17_GSM}
\end{align}
where $\mu_{n_x a} \equiv (-)^{n_x} (\vec S \ast \xi)_{n_x, k_y=\pi, \alpha = 1, a}$ and $\nu_{n_y a} \equiv (-)^{n_y} (\vec S \ast \xi)_{k_x = \pi, n_y, \alpha = 1, a}$.

%